\documentclass{article}

\usepackage{amssymb,amsmath,amsthm}
\usepackage[english]{babel}
\usepackage{algorithm}
\usepackage{algpseudocode}

\newcommand{\Nat}{{\mathbb N}}
\newcommand{\Real}{{\mathbb R}}
\newcommand{\st}{\;|\;}
\newcommand{\set}[1][ ]{\{ #1 \}}

\newcommand{\R}{\mathcal{R}}
\newcommand{\tendingtoinfty}[1]{\stackrel{#1 \rightarrow \infty}{\longrightarrow}}

\widowpenalty=10000
\clubpenalty=10000

\newtheorem{lem}{Lemma}
\newtheorem{thm}[lem]{Theorem}

\begin{document}

\title{Revisiting Deadlock Prevention: A Probabilistic Approach}

\author{Fabiano de S. Oliveira\\
Valmir C. Barbosa\thanks{Corresponding author (valmir@cos.ufrj.br).}\\
\\
Programa de Engenharia de Sistemas e Computa\c c\~ao, COPPE\\
Universidade Federal do Rio de Janeiro\\
Caixa Postal 68511\\
21941-972 Rio de Janeiro - RJ, Brazil}

\date{}

\maketitle

\begin{abstract}
We revisit the deadlock-prevention problem by focusing on priority digraphs
instead of the traditional wait-for digraphs. This has allowed us to formulate
deadlock prevention in terms of prohibiting the occurrence of directed cycles
even in the most general of wait models (the so-called AND-OR model, in which
prohibiting wait-for directed cycles is generally overly restrictive). For a
particular case in which the priority digraphs are somewhat simplified, we
introduce a Las Vegas probabilistic mechanism for resource granting and analyze
its key aspects in detail.

\bigskip
\noindent
\textbf{Keywords:}
Deadlock prevention, Priority digraphs, Probabilistic algorithms.
\end{abstract}

\section{Introduction}

In any computation, processes need resources in order to carry out their tasks.
A \emph{resource} is either a physical device (such as a printer, a CPU, a hard
disk, network bandwidth, etc.) or a logical device (such as a TCP port, the
position of an array in a data buffer, etc.). In general, resources are
expensive and, therefore, they exist in limited amounts. They must be shared by
the processes, which in turn must use them in such a way that no conflict arises
due to concurrent access. Informally, processes must not compute on shared
resources until it becomes safe to do so, which normally is taken to hold true
when they are \emph{granted} the resources they need. When such a wait turns out
to be indefinite, we say that the computation is in a \emph{deadlock} state. 

Since deadlock characterizations depend on how the waits among processes occur,
they have been based on some assumed wait model
\cite{MisraChandy, BrachaToueg, KshemkalyaniSinghal, RyangPark, BrzezinskiHelaryRaynalSinghal, Barbosa99thecombinatorics},
the most general one being the so-called AND-OR model. Such a model in essence
allows unconstrained waits to take place. It arises particularly in the scenario
where the computation requires several non-singleton groups of resources, each
having equivalent resource instances, and the processes request several
resources at once. We henceforth assume that this is the model under which waits
occur in the computations we consider, and note that the necessary and
sufficient condition for deadlocks to arise in this case is that the underlying
wait-for digraph contain a so-called b-knot
(cf.~\cite{Barbosa99thecombinatorics} and references therein). 

Preventing deadlocks in this context while assuming the usual necessary
condition that do not directly bear on the structure of the wait-for digraph is
then seen to require that, by design of the resource-granting mechanism, b-knots
never occur. This, however, seems unfeasible, which incidentally is why
prevention approaches have shunned the real problem and relied instead on
forbidding the occurrence of directed cycles in the waif-for digraph. These, of
course, are themselves necessary for b-knots to exist, but in general constitute
a much more restrictive necessary condition and prohibiting their appearance is
bound to rule out several deadlock-free ways in which the computation might
unfold. 

Here we revisit deadlock prevention by first diverting the focus away from
wait-for digraphs. We focus instead on what we call priority digraphs, thereby
putting aside the complications associated with b-knots and replacing them with
the more natural interactions of directed cycles involving priorities. While
this gives us a better conceptual handle on the problem, for computations
comprising a large number of processes there remains little hope of efficient
prevention. What we do then is to introduce a Las Vegas probabilistic mechanism
for request granting and analyze its deadlock-related properties for some cases
of interest.

Some particular graph-theoretic notations are required for what follows. A graph
$G$ has a set $V(G)$ of vertices and a set $E(G)$ of edges where each edge is a
distinct pair of vertices. When the pair of vertices in each edge is ordered, we
say that such a graph is a digraph, its edges are arcs, and denote the set of
arcs by $A(G)$. The degree $d(v)$ of a vertex $v$ in a graph $G$ is the
cardinality of the set $\set[w \in V(G) \;|\; vw \in E(G)]$. Finally, the
maximum degree in a graph $G$ is
$\Delta(G) = \max \set[ d(v) \;|\; v \in V(G) ]$. 

\section{Definitions}

We assume that the computation under study takes place in the fully asynchronous
distributed model of~\cite{LivroValmir}. In such a model, each process computes
on an independent clock and the communication among processes is effected
through the exchange of messages. Each message is sent in a point-to-point
fashion and delivered in a finite amount of time, although the exact delay is
not predictable. Messages are delivered on logical bidirectional channels that
exist between any two processes that need to communicate with each other.

A \emph{resource class} is a set of resources sharing the same properties and
providing services in such a way that any two resources from this set are
considered equivalent to each other by the processes. Therefore, a \emph{disk}
can be considered a resource class, whereas the distinct hard disks named
\emph{hd1}, \emph{hd2}, and \emph{hd3} are members of the resource class
\emph{disk}. However, a printer named \emph{prt1} will usually be a member of a
distinct resource class (say, \emph{printer}) due to the differences in
properties between disks and printers and the services they provide (although
they could be members of the same resource class \emph{output device} in some
particular application). 

Consider a computation in progress at instant $t \in \Real^+$. Each process of
such a computation is identified by a distinct natural number. Let
$\mathcal{P}^t \subseteq \Nat$ be the set of identifiers of the processes which
are requesting and/or holding resources at instant $t$. Equivalently,
$\mathcal{P}^t$ refers to processes waiting for others to release some required
resources and/or being waited for to release resources that they hold at instant
$t$. The process identified by the number $i \in \mathcal{P}^t$ will be denoted
by $P_i$. Moreover, each resource class is identified by a distinct natural
number and we denote by $\R^t \subseteq \Nat$ the set of identifiers of the
resource classes from which there exist resources requested or held at instant
$t$. The resource class identified by the number $r \in \R^t$ will be denoted by
$\R^t_r$. Finally, each resource is identified by a distinct natural number,
denoting by $R_r$ the resource identified by the number $r \in \Nat$. The
resources that matter for such a computation at instant $t$ are then seen to be
members of the set $\bigcup_{r \in \R^t} \R^t_r$. The \emph{computation graph}
$G^t$ is the graph such that $V(G^t) = \mathcal{P}^t$ and $ij \in E(G^t)$ if and
only if both $P_i$ and $P_j$ request and/or hold resources from $\R^t_r$, for
some $r \in \R^t$. The \emph{wait-for digraph} $W^t$ of such a computation at
instant $t$ models the waits among processes at this instant, that is,
$V(W^t) = \mathcal{P}^t$ and $ij \in A(W^t)$ if and only if $P_j$ holds
resources from a resource class from which resources are needed by $P_i$. When
the instant $t$ is clear from the context or unimportant in it, we simply omit
it from all notations. 

According to our definition, $\mathcal{P}^t$ (and therefore $G^t$, $W^t$,
$\R^t$, etc.) can be significantly large. Hence, we assume that all this
information is stored distributively, and also that the periodic recording of
snapshots is unfeasible. Still regarding cardinalities, note that $\R^t_r$ is
always finite for any $r \in \R^t$: an infinite $\R^t_r$ can be ignored since
its resources will never be decisively involved in a deadlock situation.
Besides, the set of resources being requested by a single process is always
finite as well, or else the process would spend an infinite amount of time to
request them. Therefore, each vertex of $W^t$ has a finite out-degree, whereas
the in-degree may be infinite.

\section{The computations that we consider}

A process, in principle, may request and/or release resources at any instant
during its execution. However, as we discuss later on in this section, allowing
such a fully unconditional behavior is problematic from the standpoint of
preventing deadlocks. We assume a more restrictive protocol to be followed by
the distributed algorithms regarding the requesting and releasing of resources.
This protocol is presented by means of a template for the distributed
algorithms. 

For all $i \in \mathcal{P}^t$, $r \in \R^t$, let $x^{i,t}_r$ be the number of
resources from $\R^t_r$ that $P_i$ requests at instant $t \in \Real^+$. We
present a template in Algorithm~\ref{Template} for specifying the local
processing by $P_i$. 

\begin{algorithm}[t]
	\caption{Algorithm for $P_i$.}
	\label{Template}
	\begin{algorithmic}[1]
		\Procedure{Template}{} 
			\While{(there remains work to be done)}		
				\State \parbox[t]{0.8\textwidth}{Compute without using shared resources.}
				\State \parbox[t]{0.8\textwidth}{\textbf{Request phase}: \parbox[t]{0.6\textwidth}{For each $r \in \R$, request $x^i_r$ resources from $\R_r$. Wait until all of them are granted.}}
				\State \parbox[t]{0.8\textwidth}{Compute using the granted resources.}
				\State \parbox[t]{0.8\textwidth}{\textbf{Release phase}: Release all granted resources.} \label{Template-Release}
			\EndWhile
		\EndProcedure
	\end{algorithmic}
\end{algorithm}

While we can offer no indisputable argument that all computations can be cast
into the template of Algorithm~\ref{Template}, our intention in focusing only
on those that can is to allow any resource-granting mechanism one may come up
with to work only on a full set of requests for each participating process. Were
it not so, then any such mechanism would have to deal with a host of the
classical deadlock-inducing pitfalls, such as the one in which two processes
already hold each a resource, from distinct singleton resource classes, and they
engage in requesting each the resource that the other holds.

Notice also that Algorithm~\ref{Template} does not per se negate any of the
necessary conditions for deadlocks to occur (not even the hold-and-wait
condition, which is precisely what happens, in general, in the request phase).
Our problem henceforth is to provide deadlock prevention through a
request-granting mechanism (RGM) interacting with the request and release phases
of all processes involved in the computation. Our solution will strive to do so
while providing as much concurrency among processes as possible.

\section{Prevention in the AND-OR wait model}

Consider a computation at instant $t \in \Real^+$ and let $i \in \mathcal{P}^t$.
For each $r \in \R^t$, we denote by $g^{i,t}_r$ the number of resources from
$\R^t_r$ which are currently granted to $P_i$. Clearly,
$g^{i,t}_r \leq x^{i,t}_r \leq |\R^t_r|$. Denote by $\mathrm{Max}(P)$ the set of
maximum elements of an order $P = (X, \prec)$, i.e.,
$\mathrm{Max}(P) = \set[ i \in X \st \not \exists$ $j \in X$ such that
$i \prec j$ $]$.

We describe a characterization for a deadlock-free RGM as follows. For
$\mathcal{P}^t_r = \set [i \in \mathcal{P}^t \st x^{i,t}_r > 0]$, we say that an
RGM is \emph{driven} by a family of orders
$\mathcal{O}^t = \set [(\mathcal{P}^t_r,$ $\prec^t_r) \st$ $r \in\mathcal{R}^t]$
if, for each $r \in \mathcal{R}^t$, two conditions hold: 

\begin{enumerate}
\item Only processes in $\mathrm{Max}(\mathcal{P}^t_r)$ are granted resources; 
\item Any order $P^t_r$ obtained from the repeated removal of a maximum element
from $(\mathcal{P}^t_r, \prec^t_r)$ is such that
$\displaystyle\sum_{i \in \mathrm{Max}(P^t_r)} x^{i,t}_r\leq |\mathcal{R}^t_r|$.
\end{enumerate}

The \emph{priority digraph} $D(\mathcal{O}^t)$ is the digraph
$(\mathcal{P}^t, A^t)$ such that $ij \in A^t$ if and only if, for some
$r \in \mathcal{R}^t$, $i \prec^t_r j$ and there does not exist
$z \in \mathcal{P}^t_r$ such that $i \prec^t_r z \prec^t_r j$.

\begin{thm}
\label{CaracLivreDeadlock:Teorico}
An RGM is deadlock-free if and only if it is driven by a family of orders
$\mathcal{O}^t = \set [(\mathcal{P}^t_r, \prec^t_r) \st r \in \mathcal{R}^t]$
such that $D(\mathcal{O}^t)$ is acyclic for each instant $t \in \Real^+$.
\end{thm}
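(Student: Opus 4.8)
The plan is to prove the two implications separately, using the operational meaning of deadlock-freedom: under the RGM, no process stays in its request phase (Algorithm~\ref{Template}) forever. Throughout, I would exploit the fact that, by Condition~1, a process $P_i$ can complete its request phase only when it is simultaneously a maximum of $(\mathcal{P}^t_r,\prec^t_r)$ for every class $r$ from which it requests resources, i.e.\ exactly when $i$ is a sink (out-degree zero) of $D(\mathcal{O}^t)$: an arc $ij$ records precisely that some $j$ covers $i$ in some $\prec^t_r$, and thus witnesses that $P_i$ is not yet maximal in class $r$.

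For sufficiency (an acyclic driving family $\Rightarrow$ deadlock-free), I would argue by peeling off sinks in topological order. If $D(\mathcal{O}^t)$ is acyclic it has a sink $P_i$; Condition~1 makes $P_i$ eligible in each of its classes, and Condition~2, applied to the initial order $(\mathcal{P}^t_r,\prec^t_r)$, guarantees $\sum_{j\in\mathrm{Max}(\mathcal{P}^t_r)} x^{j,t}_r \le |\mathcal{R}^t_r|$, so every class holds enough resources for all its current maxima to be granted at once. In particular $P_i$ receives its full request and moves on to its compute and release phases. Once $P_i$ releases and leaves $\mathcal{P}$, the surviving orders are exactly the suborders obtained by removing a maximum, so they again satisfy Condition~2 (this is why Condition~2 quantifies over every order reachable by repeated maximum-removal), and the new priority digraph is still acyclic. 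Iterating this peeling shows every process is eventually granted its resources, so no indefinite wait occurs.

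For necessity (deadlock-free $\Rightarrow$ an acyclic driving family exists), I would construct the orders from the RGM's own behaviour. For each class $r$ I would set $i \prec^t_r j$ whenever $P_i$ and $P_j$ both demand resources from $\mathcal{R}^t_r$ and, in the execution unfolding from $t$, the RGM satisfies $P_j$'s class-$r$ demand strictly before $P_i$'s; taking the transitive closure yields a strict partial order, since a deadlock-free execution cannot grant a process ahead of itself. With this definition the currently granted processes are precisely the maxima (Condition~1), and the bound $\sum_{j} x^{j,t}_r \le |\mathcal{R}^t_r|$, summed over any batch the RGM grants class-$r$ resources to at once, holds because the RGM never over-allocates a class, giving Condition~2 at every removal stage. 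Finally, a directed cycle $i_1 \to i_2 \to \cdots \to i_k \to i_1$ in $D(\mathcal{O}^t)$ would mean each $P_{i_\ell}$ is blocked in some class by $P_{i_{\ell+1}}$ and can advance only after $P_{i_{\ell+1}}$ releases; this is a circular wait in which no process of the cycle ever becomes maximal in all its classes, contradicting deadlock-freedom. Hence the constructed family drives the RGM and has acyclic $D(\mathcal{O}^t)$ at every $t$.

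The step I expect to be the main obstacle is guaranteeing, in the sufficiency direction, that sinks actually exist when $\mathcal{P}^t$ is infinite: an infinite strictly ascending chain in some $\prec^t_r$ is acyclic yet has no maximal element, so no process would ever be grantable. I would address this by invoking the finiteness already recorded in the model --- each class $\mathcal{R}^t_r$ is finite and each process issues only finitely many requests, and Condition~2 forces $|\mathrm{Max}(\mathcal{P}^t_r)| \le |\mathcal{R}^t_r|$ --- to argue that the maximum-removal process of Condition~2 is well-founded and always exposes maximal elements, so that the topological peeling is well defined; making this rigorous, rather than tacitly assuming finite orders, is the delicate part of the argument.
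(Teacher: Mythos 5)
Your proposal follows essentially the same route as the paper's proof. For sufficiency, the paper also peels off a sink of $D(\mathcal{O}^T)$: it argues by induction on $|\mathcal{P}^T|$ that such a sink exists, is granted all its resources (using conditions 1 and 2 exactly as you do), releases them in finite time, and then applies the induction hypothesis to $\mathcal{P}^T \setminus P_i$ --- your ``topological peeling'' in inductive form. For necessity, the paper likewise builds the driving orders from the execution history, setting $i \prec^t_r j$ when part of the class-$r$ resources used by $P_i$ was previously used by $P_j$ (a minor variant of your grant-time order), and it kills a hypothetical cycle by the same temporal contradiction: $i \prec^t_r j$ forces $P_i$'s grant instant to exceed $P_j$'s release instant, which cannot hold around a cycle.

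The one place you go beyond the paper is the worry about infinite $\mathcal{P}^t$, and there your instinct is right but your proposed repair fails. Condition 2 does not make the orders well-founded in the direction you need: on an infinite ascending chain $i_1 \prec^t_r i_2 \prec^t_r i_3 \prec^t_r \cdots$ the set $\mathrm{Max}(\mathcal{P}^t_r)$ is \emph{empty}, so the sum in condition 2 is an empty sum (hence $\leq |\mathcal{R}^t_r|$ trivially), and repeated maximum-removal never removes anything. Consequently an RGM that grants nothing at all is driven by this family, $D(\mathcal{O}^t)$ is an acyclic infinite path, and yet every process waits forever --- so the bound $|\mathrm{Max}(\mathcal{P}^t_r)| \leq |\mathcal{R}^t_r|$ you extract from condition 2 cannot rule such chains out, and sufficiency genuinely requires $\mathcal{P}^t$ to be finite (or an explicit hypothesis that no $\prec^t_r$ has an infinite ascending chain). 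The paper makes this assumption tacitly: its induction on the cardinality $|\mathcal{P}^T|$ only makes sense for finite process sets, even though the surrounding text advertises ``formally infinite'' systems. So your flagged obstacle is a real gap in the theorem as stated for infinite systems; just be aware that the fix you sketch does not close it, and the paper itself does not close it either.
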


\begin{proof}
Let $\mathcal{A}$ be a deadlock-free RGM. At each instant $t \in \Real^+$, the
family of orders
$\mathcal{O}^t = \set [(\mathcal{P}^t_r, \prec^t_r) \st r \in \mathcal{R}^t]$ is
that defined as follows: $i \prec^t_r j$ precisely when part of the resources
from $\R^t_r$ used by $P_i$ during the computation was previously used by $P_j$.
Because $\mathcal{A}$ is deadlock-free, it can be seen to be driven by
$\mathcal{O}^t$ due to the following facts: (i) the sequence of resource
releases by processes corresponds to the operation of repeatedly removing
maximum elements from $(\mathcal{P}^t_r, \prec^t_r)$ for each
$r \in \mathcal{R}^t$; and (ii) the number of available resources does not
exceed the number of granted resources at any time. Now, suppose that
$D(\mathcal{O}^t)$ is cyclic. Therefore, there exist $r_1, \ldots, r_L$ and
$a_1, \ldots, a_L$ such that
$a_1 \prec^t_{r_1} a_2 \prec^t_{r_2} a_3 \prec^t_{r_3} \cdots \prec^t_{r_{L-1}} a_L \prec^t_{r_L} a_1$.
Note that if $i \prec^t_r j$ for some $r \in \mathcal{R}^t$, then the instant at
which $P_i$ is granted all its required resources is greater than the instant at
which $P_j$ releases all its resources. By transitivity on the assumed cycle, a
contradiction is easily obtained.

Conversely, let $\mathcal{A}$ be an RGM and, for each $t \in \Real^+$, let
$\mathcal{O}^t = \set [(\mathcal{P}^t_r, \prec^t_r) \st r \in \mathcal{R}^t]$
be a family of orders such that $\mathcal{A}$ is driven by $\mathcal{O}^t$ with
$D(\mathcal{O}^t)$ acyclic. For any particular $T \in \Real^+$, we show that
each process will be granted its requested resources in a finite amount of time
by induction on $|\mathcal{P}^T|$, therefore showing that $\mathcal{A}$ is
deadlock-free. If $|\mathcal{P}^T| = 1$, then the claim is trivial. Suppose
$|\mathcal{P}^T| > 1$ and that the claim holds for all orders whose process sets
are strictly contained in $\mathcal{P}^T$. Since $D(\mathcal{O}^T)$ is acyclic,
there exists a vertex $i$ which is a sink of $D(\mathcal{O}^T)$. From condition
$2$ for an RGM to be said to be driven by a family of orders, it follows that it
is possible to grant resources to all processes in
$\mathrm{Max}(\mathcal{P}^T_r)$ for each $r \in \mathcal{R}^T$. By condition
$1$, $\mathcal{A}$ eventually grants $P_i$ all its required resources in finite
time. In finite time, $P_i$ will release all such resources. Clearly, by the
induction hypothesis, all processes in $\mathcal{P}^T \setminus P_i$ are granted
their resources in finite time. Consequently, the same holds for
$\mathcal{P}^T$.
\end{proof}

To illustrate the use of the priority-based approach to which
Theorem~\ref{CaracLivreDeadlock:Teorico} refers, we now consider the classical
prevention strategy that forces processes to follow a pre-established linear
order of resource classes inside each of the request phases of
Algorithm~\ref{Template}. In general, singleton resource classes are assumed in
such a strategy. Casting it into our priority-based terms requires not only the
linear order, which we let $(\bigcup_{t \geq 0} \R^t, \prec)$ be, but also that
we specify the driving family of orders
$\mathcal{O}^t = \set [(\mathcal{P}^t_r, \prec^t_r) \st r \in \mathcal{R}^t]$ at
each instant $t \in \Real^+$ for the corresponding RGM, which we denote by
$\mathcal{C}$. We do this by letting
\begin{center}
	$C^t_r = \set[ i \in \mathcal{P}^t_r \st g^{i,t}_r < x^{i,t}_r$ and $\forall r \prec s$, $g^{i,t}_s = x^{i,t}_s ]$,
\end{center}
for all $t \in \Real^+$ and $r \in \mathcal{R}^t$, and then letting
$\prec^t_r$ be a linear order of $\mathcal{P}^t_r$ such that for all
$i, j \in \mathcal{P}^t_r$,
\begin{center}
	$(i \in C^t_{r_1}$ and $j \in C^t_{r_2}$ with $r_2 \prec r_1) \Longrightarrow i \prec^t_r j$.
\end{center}

Since $\mathcal{C}$ is by construction driven by $\mathcal{O}^t$ for each
$t \in \Real^+$, by Theorem~\ref{CaracLivreDeadlock:Teorico} it suffices to
prove that $D(\mathcal{O}^t)$ is acyclic in order for $\mathcal{C}$ to be
deadlock-free. This can be done as follows. Suppose there exists a cycle in
$D(\mathcal{O}^t)$ for some $t \in \Real^+$. Therefore, there exist
$p_1,\ldots,p_L \in \mathcal{P}^t$ and $r_1,\ldots,r_L \in \R^t$, $L \geq 2$,
such that
$p_1 \prec^t_{r_1} p_2 \prec^t_{r_2} \cdots \prec^t_{r_{L-1}} p_L \prec^t_{r_L} p_1$.
Since $p_1 \prec^t_{r_1} p_2$, then either $p_1$ and $p_2$ belong to a same set
$C^t_z$ or they belong, respectively, to $C^t_{z_1}$ and $C^t_{z_2}$ with
$z_2 \prec z_1$. By transitivity, it follows that $p_1, \ldots, p_L \in C^t_z$,
and therefore they should be in linear order, which is a contradiction. 

The same driving family of orders defined above can be used for the classical
prevention strategy with non-singleton resource classes, although it will not
yield the best possible concurrency. The driving family of orders can be easily
changed in this case to improve concurrency by letting $\prec^t_r$ be a partial
order instead of a linear order.

\section{A basic deadlock-free Las Vegas RGM}

Consider a computation at any given instant $t \in \Real^+$. We assume that no
single computer has enough memory or processing capacity to store and process
all the data relating to the various structures we have seen so far. In this
section, we present a deadlock-free RGM which is both simple and fully
distributed, being therefore amenable to deployment in significantly large
(formally infinite) systems. 

Note that each $\R_r$, $r \in \R$, corresponds to a clique (not necessarily
maximal) of $G$. The converse does not hold in general. Therefore, the size of
$\R$ is limited by the number of edges of $G$. Since the complexity of an RGM in
general increases as $|\R|$ increases, for a worst-case analysis we assume from
now on that each edge of $G$ represents a resource class containing exactly one
resource. An example application which fits this assumption naturally is that of
the communication channels between the processes. Although bidirectional, when
half-duplex they cannot transmit in both directions simultaneously. Therefore,
each communication channel is a resource accessed by the pair of processes which
it links together. By Theorem~\ref{CaracLivreDeadlock:Teorico}, the goal of a
deadlock-free RGM in this case is to orient the edges of $G$ acyclically.

Our approach to obtain such an acyclic orientation is based on finding
independent sets of $G$. The idea is to grant the resources to processes in
independent sets. In such an approach, the larger each independent set, the more
concurrency the computation is expected to be able to achieve. Note that the
number of processes using shared resources at any instant is clearly bounded
from above by $\omega(G)$, the size of $G$'s largest independent set. 

It is well-known that determining $\omega(G)$ for a general graph $G$ is an
NP-hard problem \cite{GareyJohnson}. Our strategy for generating independent
sets of $G$ will be to generate digraphs $D$ obtained from $G$ by orienting its
edges and using the set of sinks of $D$ as the independent set. In fact, not
only is the set of sinks of a digraph $D$ an independent set of $G$, but also
each maximal independent set of $G$ is the set of sinks for some digraph $D$ of
$G$. The problem is reduced therefore to finding ``good'' digraphs $D$, that is,
digraphs which maximize the cardinality of the set of sinks. The general
approach is given in Algorithm~\ref{RGMOrientRandom}. Although the algorithm is
specified in a centralized manner, note that implementing it in a distributed
fashion with $O(1)$ time complexity is straightforward. Each process needs only
to agree on the orientation of each incident edge with the corresponding
neighbor and a process accesses the requested resources precisely when it
becomes a sink. Each new iteration may be implemented as each process initiating
the negotiation of a new orientation of its incoming edges with its neighbors.

\begin{algorithm}[t]
	\caption{Random-Orientation RGM (centralized version).}
	\label{RGMOrientRandom}
	\begin{algorithmic}[1]
		\Procedure{Random-Orientation-RGM}{} 
			\While{$\mathcal{P}^t$ is nonempty}
				\State $D$ $\gets$ random orientation of $G^t$, both directions of each edge having the same probability.
				\State Grant the sinks of $D$ all resources they require.
			\EndWhile
		\EndProcedure
	\end{algorithmic}
\end{algorithm}

Assume that the digraph $D$ is obtained by sequentially orienting each edge of
$G$ randomly such that both orientations for each edge are equally likely to
occur. Note that this strategy does not take into account the orientations done
so far at any given instant, and therefore may be subject to improvements by
using this piece of information in order to maximize the expected number of
sinks. Nevertheless, it is the RGM of interest in this paper and we study its
effectiveness. In order to do so, we define the random variable $X_n$ to be the
number of sinks of $D$, for $n$ the number of vertices of $D$, and analyze two
quantities: the probability $\mathrm{Pr}[X_n > 0]$ of generating at least one
sink and the expected number $\mathrm{E}[X_n]$ of sinks. The former indicates
how likely it is for the computation to make progress at each stage. The latter
is used to derive the expected time of such a Las Vegas algorithm. In fact, if
$T(n)$ is the random variable corresponding to the time complexity of the
algorithm on $n$ initial processes when no new processes can be awaken during
the execution, then $\mathrm{E}[T(n)]$ is defined recursively by
$\mathrm{E}[T(n)] = 1 + \mathrm{E}[T(n - \mathrm{E}[X_n])]$.

Clearly, $\Pr[X_n < 0] = \Pr[X_n > \omega(G)] = 0$. In the following
subsections, we derive expressions for $\Pr[X_n > 0]$ and $\mathrm{E}[X_n]$
restricted to distinct classes of graphs. The final subsection summarizes the
results.

While working out $\mathrm{E}[X_n]$, we often use the auxiliary Bernoulli random
variable $Y_v$, $v \in V(G)$, defined as $Y_v = 1$ if the vertex $v$ is a sink
in $D$, $Y_v = 0$ otherwise. Therefore, $X_n = \sum_{v \in V(G)} Y_v$ and
consequently
$\mathrm{E}[X_n] = \mathrm{E}[\sum_{v \in V(G)} Y_v] = \sum_{v \in V(G)} \mathrm{E}[Y_v] = \sum_{v \in V(G)} \Pr[Y_v = 1]$.

\subsection{Trees}

Let $T_n$ denote a general tree on $n$ vertices. Trees are connected graphs free
of cycles by definition, and thus no cycles can be formed in $D$ either.
Therefore, trivially $\Pr[X_n > 0] = 1$. We work out $\mathrm{E}[X_n]$ for two
subclasses of trees, stars $S_n$ on $n$ vertices and paths $P_n$ on $n$
vertices, which have respectively the largest and the smallest value for
$\omega(T_n)$. In fact, the following lemma describes the bounds on the size of
an independent set of $T_n$, for $n \geq 2$. Trivially, $\omega(T_n) = 1$ for
$n = 1$.

\begin{lem}
\label{WTnWPn}
$\left\lceil \frac{n}{2} \right\rceil \leq \omega(T_n) \leq n - 1$, for each
$n \geq 2$.
\end{lem}

\begin{proof}
Obviously, $\omega(T_n) \leq n - 1$. For stars, $\omega(S_n) = n - 1$. On the
other hand, we prove that
$\omega(T_n) \geq \omega(P_n) = \left\lceil n/2 \right\rceil$, thus establishing
the result.

Clearly, the result holds for $n = 2$. Suppose it holds for all trees having
fewer than $n \geq 3$ vertices. Let $u$ be a leaf of $T_n$ and let
$uv \in E(T_n)$. Let $T'$ be the forest obtained by deleting vertices $u$ and
$v$ from $T_n$, and let $T'_1,\ldots,T'_k$ be the connected components of $T'$.
Let $n'_i = |V(T'_i)|$ for each $1 \leq i \leq k$. By the induction hypothesis,
$\omega(T'_i) \geq \omega(P_{n'_i})$ for each $1 \leq i \leq k$ and therefore
$\sum_{i=1}^k \omega(T'_i) \geq \sum_{i=1}^k \omega(P_{n'_i})$. Clearly,
$\omega(T_n) - 1 \geq \omega(T') = \sum_{i=1}^k \omega(T'_i) \geq \sum_{i=1}^k \omega(P_{n'_i}) \geq \omega(P_{n-2}) = \omega(P_n) - 1$.
\end{proof}

\begin{thm}
\label{EXSn}
If $G$ is a star $S_n$, then
$\mathrm{E}[X_n] = \frac{n-1}{2} + \frac{1}{2^{n-1}}$, for each $n \geq 1$.
\end{thm}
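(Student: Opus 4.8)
A star $S_n$ consists of one central vertex $c$ and $n-1$ leaves, each leaf joined only to $c$. The plan is to compute $\mathrm{E}[X_n]$ by the linearity decomposition $\mathrm{E}[X_n] = \sum_{v} \Pr[Y_v = 1]$ already set up in the excerpt, so I would split the sum into the contribution of the center and the contributions of the leaves and evaluate each $\Pr[Y_v=1]$ directly from the random-orientation model.

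**Computing the per-vertex sink probabilities.** Let me first analyze the leaves. Each leaf $v$ has degree $1$: it is a sink precisely when its single incident edge is oriented from $c$ toward $v$, which happens with probability $1/2$ independently for each edge. Hence $\Pr[Y_v = 1] = 1/2$ for every one of the $n-1$ leaves, contributing $(n-1)/2$ to the expectation. Next I would handle the center $c$, which has degree $n-1$: it is a sink exactly when all $n-1$ incident edges point inward (from each leaf toward $c$). Since the orientations are chosen independently and uniformly, $\Pr[Y_c = 1] = (1/2)^{n-1} = 1/2^{n-1}$. Adding the two contributions yields
\[
\mathrm{E}[X_n] = \frac{n-1}{2} + \frac{1}{2^{n-1}},
\]
which is exactly the claimed formula.

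**Verifying the boundary case and anticipated obstacles.** I would check $n=1$ separately, where $S_1$ is a single isolated vertex that is vacuously a sink: the formula gives $\frac{0}{2} + \frac{1}{2^{0}} = 1$, matching $X_1 = 1$. The argument is essentially a routine linearity-of-expectation calculation, so there is no deep obstacle here; the only place requiring a moment of care is the independence of edge orientations, which is guaranteed by the model in Algorithm~\ref{RGMOrientRandom} (each edge is oriented by an independent fair coin), so the probability that all $n-1$ edges incident to $c$ agree in pointing toward $c$ factors cleanly into $(1/2)^{n-1}$. I would also note that the two events ``$c$ is a sink'' and ``leaf $v$ is a sink'' are not mutually exclusive in general, but this is irrelevant since linearity of expectation does not require independence or disjointness of the indicator events—only that each $\Pr[Y_v=1]$ be computed correctly, which is what makes this the natural and painless route.
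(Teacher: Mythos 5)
Your proof is correct and follows exactly the same route as the paper: linearity of expectation splitting the sum into the universal (center) vertex, which is a sink with probability $1/2^{n-1}$, and the $n-1$ leaves, each a sink with probability $1/2$. The added check of the $n=1$ boundary case and the remark on why independence of the indicator events is not needed are sound but not required.
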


\begin{proof}
Let $u$ be the universal vertex of $S_n$. Therefore,
\begin{eqnarray*}
	 \mathrm{E}[X_n] & = & \Pr[Y_u = 1] + \sum_{v \in V(G)\setminus u} \Pr[Y_v = 1] \\
	      & = & \frac{1}{2^{n-1}} + (n-1) \times \frac{1}{2} \\
		  & = & \frac{n-1}{2} + \frac{1}{2^{n-1}} \;{.}
\end{eqnarray*}
\end{proof}

\begin{thm}
\label{EXPn}
If $G$ is a path $P_n$, then $\mathrm{E}[X_n] = \frac{n+2}{4}$, for each
$n \geq 1$.
\end{thm}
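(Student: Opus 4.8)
The plan is to reuse the decomposition $\mathrm{E}[X_n] = \sum_{v \in V(P_n)} \Pr[Y_v = 1]$ already established in the excerpt, exactly as in the proof of Theorem~\ref{EXSn}, and to evaluate each term from the degree of the corresponding vertex. The crucial observation is that a vertex $v$ becomes a sink of $D$ if and only if every one of the $d(v)$ edges incident to $v$ is oriented toward $v$; since each edge is oriented independently with each direction having probability $1/2$, this yields $\Pr[Y_v = 1] = 2^{-d(v)}$. This reduces the entire computation to a degree count and, importantly, never requires the events $\{Y_v = 1\}$ to be mutually independent: linearity of expectation makes the (obvious) correlation between neighboring vertices irrelevant.

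First I would label the vertices of the path as $v_1, v_2, \ldots, v_n$ along its length. For $n \geq 2$, the two endpoints $v_1$ and $v_n$ have degree $1$, while each of the remaining $n - 2$ internal vertices has degree $2$. Substituting into the formula above, each endpoint contributes $2^{-1} = 1/2$ and each internal vertex contributes $2^{-2} = 1/4$, so that
\[
\mathrm{E}[X_n] = 2 \cdot \frac{1}{2} + (n-2) \cdot \frac{1}{4} = 1 + \frac{n-2}{4} = \frac{n+2}{4}.
\]
This single summation delivers the claimed value directly, with no recursion or induction needed.

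The remaining care concerns the small cases where the generic degree pattern degenerates. For $n = 2$ there are no internal vertices, and the sum above still reads $2 \cdot (1/2) = 1 = (2+2)/4$, so that instance is already covered. The genuinely degenerate case is $n = 1$, where $P_1$ is a single isolated vertex of degree $0$ that is vacuously a sink; this case I would record explicitly rather than letting it fall out of the degree sum. The main obstacle here is minimal: the only real pitfall is miscounting the endpoint-versus-interior degrees, or being tempted to analyze the joint distribution of the sinks (for instance, noting that two adjacent vertices can never both be sinks) when in fact summing the marginal probabilities via linearity of expectation suffices. Getting the degree bookkeeping right is therefore the one step to carry out with precision.
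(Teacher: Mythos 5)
Your proof is correct and takes essentially the same approach as the paper: linearity of expectation, $\Pr[Y_v = 1] = 2^{-d(v)}$, and the count of two degree-$1$ endpoints plus $n-2$ degree-$2$ internal vertices, yielding $2\cdot\frac{1}{2} + (n-2)\cdot\frac{1}{4} = \frac{n+2}{4}$. Your attention to $n=1$ is in fact more careful than the paper's proof, which silently ignores that case---and justifiably suspicious, since an isolated vertex is always a sink, so $\mathrm{E}[X_1] = 1 \neq \frac{3}{4}$ and the stated formula actually fails there.
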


\begin{proof}
Let $u,w$ be the vertices of $P_n$ having unit degree. Therefore,
\begin{eqnarray*}
	 \mathrm{E}[X_n] & = & \Pr[Y_u = 1] + \Pr[Y_w = 1] + \sum_{v \in V(G)\setminus \set [u,w]} \Pr[Y_v = 1] \\
	    & = & 2 \times \frac{1}{2} + (n-2) \times \frac{1}{4}  \\
		& = & \frac{n+2}{4} \;{.}
\end{eqnarray*}
\end{proof}

\subsection{Cycles}

The simplest class of graphs containing at least one cycle is that of the cycles
itself. Denote by $C_n$ a cycle having $n$ vertices. The following theorem is
straightforward.

\begin{thm}
\label{PXEXCn}
If $G$ is a cycle $C_n$, then $\Pr[X_n > 0] = 1 - \frac{1}{2^{n-1}}$ and
$\mathrm{E}[X_n] = \frac{n}{4}$, for each $n \geq 3$.
\end{thm}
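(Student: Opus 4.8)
The plan is to treat the two quantities separately: $\mathrm{E}[X_n]$ succumbs to linearity of expectation, while $\Pr[X_n > 0]$ is most easily obtained through its complement $\Pr[X_n = 0]$.

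For the expectation, I would reuse the decomposition $X_n = \sum_{v \in V(C_n)} Y_v$ introduced above, so that $\mathrm{E}[X_n] = \sum_v \Pr[Y_v = 1]$. In a cycle every vertex has degree exactly $2$, and the orientations of its two incident edges are chosen independently, each direction with probability $1/2$. A vertex is a sink precisely when both incident edges point toward it, an event of probability $(1/2)^2 = 1/4$. Summing over the $n$ vertices yields $\mathrm{E}[X_n] = n/4$ at once.

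For $\Pr[X_n > 0]$ I would instead compute $\Pr[X_n = 0]$, the probability that the random orientation $D$ has no sink at all, and then take the complement. The key structural claim is that an orientation of $C_n$ is sink-free if and only if it makes $C_n$ a directed cycle, of which there are exactly two (the two consistent ways of travelling around). Since the sample space consists of $2^n$ equally likely orientations, this gives $\Pr[X_n = 0] = 2/2^n = 1/2^{n-1}$ and hence $\Pr[X_n > 0] = 1 - 1/2^{n-1}$.

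The step I expect to demand the most care is establishing that the only sink-free orientations are those two directed-cycle orientations. I would fix a traversal direction around the cycle and label each edge according to whether it agrees with or opposes the traversal, obtaining a cyclic binary string. A vertex is a sink exactly at a position where an ``agreeing'' edge is followed by an ``opposing'' one, and a source exactly at the reverse transition; thus, in a cyclic string, the number of sinks equals the number of sink-type transitions, which equals the number of source-type transitions. Forbidding sinks therefore forbids all such transitions, forcing the string to be constant, i.e.\ all edges run the same way around the cycle. This produces precisely the two directed-cycle orientations and closes the argument, the converse direction being immediate since in a directed cycle every vertex has in-degree and out-degree $1$ and so cannot be a sink.
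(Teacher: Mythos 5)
Your proof is correct and follows essentially the same route as the paper's: linearity of expectation over the per-vertex sink indicators gives $\mathrm{E}[X_n] = n \times \frac{1}{4}$, and the probability is obtained via the complement $\Pr[X_n = 0] = 2 \times \frac{1}{2^n}$, i.e.\ by counting the two sink-free (directed-cycle) orientations among the $2^n$ equally likely ones. The only difference is that the paper asserts this count with a ``Clearly,'' whereas you justify it with the transition-counting argument on the cyclic edge-labeling string --- a worthwhile filling-in of detail, but the same approach.
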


\begin{proof}
Clearly, 
\begin{eqnarray*}
	 \Pr[X_n > 0] & = & 1 - \Pr[X_n = 0] \\
				  & = & 1 - 2 \times \frac{1}{2^n} \\
				  & = & 1 - \frac{1}{2^{n-1}}
\end{eqnarray*}
and
\begin{eqnarray*}
	 \mathrm{E}[X_n] & = & \sum_{v \in V(G)} \Pr[Y_v = 1] \\
				     & = & n \times \frac{1}{4} \\
				     & = & \frac{n}{4} \;{.}
\end{eqnarray*}
\end{proof}

Note that, in particular, $\omega(C_n) = \lfloor n/2 \rfloor$.

\subsection{Complete graphs}

Consider the class of graphs containing the maximum possible number of cycles in
a graph, that is, graphs for which any subset of vertices induces a cycle. This
class is that of the complete graphs. Denote by $K_n$ a complete graph having
$n$ vertices. The odds of obtaining a sink in a random orientation of such a
graph are the worst possible, since $\omega(G) = 1$ in this case.

\begin{thm}
\label{PXEXKn}
If $G$ is a complete graph $K_n$, then
$\Pr[X_n > 0] = \mathrm{E}[X_n] = \frac{n}{2^{n-1}}$, for each $n \geq 1$.
\end{thm}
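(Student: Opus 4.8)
The plan is to exploit a structural feature of complete graphs that forces the two quantities to coincide. First I would observe that any random orientation $D$ of $K_n$ has at most one sink: if two distinct vertices $u$ and $v$ were both sinks, then the edge $uv$, which exists because $K_n$ is complete, would have to be oriented toward $u$ and at the same time toward $v$, which is impossible for a single arc. Hence no orientation of $K_n$ produces more than one sink, and so $X_n$ takes values only in $\{0,1\}$.

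This collapses the problem. For a $\{0,1\}$-valued random variable the expectation equals the probability of the value $1$, so $\mathrm{E}[X_n] = \Pr[X_n = 1] = \Pr[X_n > 0]$; it therefore suffices to compute a single one of the two expressions. For $\mathrm{E}[X_n]$ I would reuse the decomposition $X_n = \sum_{v \in V(G)} Y_v$ together with linearity of expectation, reducing everything to the single-vertex probability $\Pr[Y_v = 1]$. In $K_n$ every vertex has degree $n-1$, so $v$ is a sink exactly when all $n-1$ incident edges point toward it; as the edges are oriented independently with probability $1/2$ each, this gives $\Pr[Y_v = 1] = 1/2^{n-1}$. Summing over the $n$ vertices yields $\mathrm{E}[X_n] = n/2^{n-1}$, and the equality above then delivers both stated formulas.

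I do not expect a substantive obstacle. The boundary case $n = 1$ checks out directly, the lone vertex being vacuously a sink so that both quantities equal $1 = 1/2^{0}$. The only genuine idea is the at-most-one-sink remark; once it is in place, the argument is just linearity of expectation and an elementary per-vertex computation. If anything warrants care, it is making explicit that $X_n \in \{0,1\}$, since this is precisely what allows $\mathrm{E}[X_n]$ and $\Pr[X_n > 0]$ to be identified rather than merely related by an inequality.
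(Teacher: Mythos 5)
Your proof is correct and takes essentially the same route as the paper: both rest on the observation that a complete graph admits at most one sink (so $X_n \in \{0,1\}$ and $\Pr[X_n>0]=\Pr[X_n=1]=\mathrm{E}[X_n]$), combined with the per-vertex probability $\Pr[Y_v=1]=1/2^{n-1}$ summed over the $n$ vertices. The only difference is cosmetic ordering—the paper computes $\Pr[X_n>0]$ first via the disjointness of the sink events and then notes the same sum gives $\mathrm{E}[X_n]$, whereas you compute the expectation first and transfer it; you also make the at-most-one-sink argument explicit where the paper leaves it implicit.
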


\begin{proof}
Clearly, 
\begin{eqnarray*}
	 \Pr[X_n > 0] & = & \Pr[X_n = 1] \\
				  & = & \Pr[\bigvee_{v \in V(G)} Y_v = 1] \\
			      	& = & \sum_{v \in V(G)} \Pr[Y_v = 1] \\
				  & = & n \times \frac{1}{2^{n-1}} \\
				  & = & \frac{n}{2^{n-1}} \;{.}
\end{eqnarray*} 
Also, the above summation yields $\mathrm{E}[X_n]$ as well.
\end{proof}

\subsection{Bounded-degree graphs}

The previous graph classes (trees, cycles, and complete graphs) were considered
based on their numbers of cycles (respectively, no cycles, exactly one cycle,
and the maximum possible number of cycles). We now consider graphs often claimed
to arise from practical applications. In the particular context of the present
section, recall that each edge of $G$ corresponds to a resource class. Since
$\R(i) = \set [r \in \R \st x^i_r > 0]$ is finite for each $i \in \mathcal{P}$,
therefore $G$ is a bounded-degree graph. More precisely,
$\Delta(G) = \max \set[ |\R(i)| \st i \in \mathcal{P} ]$ is a constant. And even
though this need not hold for $G$ in general, we now proceed to calculate
$\Pr[X_n > 0]$ and $\mathrm{E}[X_n]$ when degrees are bounded. Let $B_{n, k}$
denote a graph $G$ on $n$ vertices such that $\Delta(G) = k$.

\begin{thm}
\label{PXBn}
If $G$ is a bounded-degree graph $B_{n, k}$, then
$\Pr[X_n > 0] \geq 1 - \left( 1 - \frac{1}{2^k} \right)^{\left\lceil \frac{n}{k+1} \right\rceil }$,
for each $n \geq 1$, $k \geq 1$.
\end{thm}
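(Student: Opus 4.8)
The plan is to exploit the fact that the sink events of non-adjacent vertices are mutually independent, and then to restrict attention to a suitably large independent set of $G$. First I would observe that a vertex $v$ is a sink of $D$ precisely when each of its $d(v)$ incident edges is oriented toward it, an event of probability $\Pr[Y_v = 1] = 1/2^{d(v)}$; since $d(v) \leq k$, this is at least $1/2^k$. The crucial point is that the event ``$v$ is a sink'' depends only on the orientations of the edges incident to $v$. Hence, if $u$ and $v$ are non-adjacent, the corresponding edge sets are disjoint and the events $\{Y_u = 1\}$ and $\{Y_v = 1\}$ are independent; more generally, for any independent set $I$ of $G$ the family $\{Y_v : v \in I\}$ consists of mutually independent Bernoulli variables.

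Next I would bound $\Pr[X_n = 0]$ from above. Fix an independent set $I$. Since the event $\{X_n = 0\}$ is contained in the event that no vertex of $I$ is a sink, independence gives $\Pr[X_n = 0] \leq \prod_{v \in I}\left(1 - 1/2^{d(v)}\right) \leq \left(1 - 1/2^k\right)^{|I|}$, the last step using $d(v) \leq k$. It remains to guarantee that $I$ can be chosen large. Because $\Delta(G) = k$, the standard greedy argument --- repeatedly add a vertex to $I$ and delete it together with its at most $k$ neighbors, removing at most $k+1$ vertices per step --- yields an independent set with $|I| \geq \left\lceil n/(k+1) \right\rceil$. As the function $1 - (1 - 1/2^k)^m$ is increasing in $m$, taking $I$ of this size produces the claimed bound $\Pr[X_n > 0] = 1 - \Pr[X_n = 0] \geq 1 - \left(1 - 1/2^k\right)^{\left\lceil n/(k+1)\right\rceil}$.

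I expect the main conceptual obstacle to be the independence argument. Unlike the expectation computation, where linearity sidesteps all dependence, a lower bound on $\Pr[X_n > 0]$ genuinely requires controlling correlations, and the clean way to do this is to pass to an independent set, on which the incident-edge sets are pairwise disjoint and independence is exact. Once that observation is in place, the remaining ingredients --- the per-vertex sink probability $1/2^{d(v)}$, the containment $\{X_n = 0\} \subseteq \bigcap_{v \in I}\{Y_v = 0\}$, the monotonicity in $|I|$, and the greedy lower bound $n/(k+1)$ on the independence number of a graph of maximum degree $k$ --- are routine, and no delicate estimation is needed.
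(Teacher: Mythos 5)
Your proof is correct and follows essentially the same route as the paper's: pass to an independent set of size at least $\left\lceil \frac{n}{k+1} \right\rceil$, use the mutual independence of the sink events on that set to factor $\Pr[X_n = 0]$ into a product, and bound each factor by $1 - \frac{1}{2^k}$. The only difference is that you explicitly justify two facts the paper takes for granted --- the independence of $\{Y_v : v \in I\}$ via disjointness of incident-edge sets, and the greedy bound $\omega(G) \geq \left\lceil \frac{n}{k+1} \right\rceil$ --- which is a welcome addition but not a different argument.
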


\begin{proof}
Let $I$ be a maximum independent set of $G$, that is,
$|I| = \omega(G) \geq \left\lceil \frac{n}{k+1} \right\rceil$. Therefore,
\begin{eqnarray*}
	 \Pr[X_n = 0] & = & \Pr[\bigwedge_{v \in V(G)} Y_v = 0] \\
				  & \leq & \Pr[\bigwedge_{v \in I} Y_v = 0] \\
				  & = & \prod_{v \in I} \Pr[Y_v = 0] \\
	              & = & \prod_{v \in I} (1 - \Pr[Y_v = 1]) \\
                  & = & \prod_{v \in I} (1 - \frac{1}{2^{d(v)}}) \\
                  & \leq & \left( 1 - \frac{1}{2^k} \right)^{|I|} \\
	              & \leq & \left( 1 - \frac{1}{2^k} \right)^{\left\lceil \frac{n}{k+1} \right\rceil} \;{.}
\end{eqnarray*}
Consequently,
$\Pr[X_n > 0] = 1 - \Pr[X_n = 0] \geq 1 - \left( 1 - \frac{1}{2^k} \right)^{\left\lceil \frac{n}{k+1} \right\rceil }$.
\end{proof}

\begin{thm}
\label{EXBn}
If $G$ is a bounded-degree $B_{n, k}$, then
$\mathrm{E}[X_n] \geq \frac{n}{2^k}$, for each $n \geq 1$, $k \geq 1$.
\end{thm}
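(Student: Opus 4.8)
The plan is to use the decomposition $X_n = \sum_{v \in V(G)} Y_v$ established in the preamble, together with linearity of expectation, which gives $\mathrm{E}[X_n] = \sum_{v \in V(G)} \Pr[Y_v = 1]$. The key observation is that a vertex $v$ is a sink of $D$ precisely when all $d(v)$ edges incident to it are oriented toward $v$. Since each edge is oriented independently and uniformly, this event has probability $\left( \frac{1}{2} \right)^{d(v)} = \frac{1}{2^{d(v)}}$, so $\mathrm{E}[Y_v] = \Pr[Y_v = 1] = 1/2^{d(v)}$.

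First I would record that $\mathrm{E}[X_n] = \sum_{v \in V(G)} \frac{1}{2^{d(v)}}$. Then, since $B_{n,k}$ has maximum degree $\Delta(G) = k$, every vertex satisfies $d(v) \leq k$, whence $\frac{1}{2^{d(v)}} \geq \frac{1}{2^k}$ term by term. Summing this bound over all $n$ vertices yields immediately
\begin{eqnarray*}
\mathrm{E}[X_n] & = & \sum_{v \in V(G)} \frac{1}{2^{d(v)}} \\
& \geq & \sum_{v \in V(G)} \frac{1}{2^k} \\
& = & \frac{n}{2^k} \;{,}
\end{eqnarray*}
which is exactly the claimed inequality.

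I expect no real obstacle here: the only point requiring care is the justification of $\Pr[Y_v = 1] = 1/2^{d(v)}$, which relies on the independence of the orientations of distinct edges and the fact that the $d(v)$ edges incident to $v$ must all point inward. This is the same per-vertex sink probability already used implicitly in the earlier theorems (for instance, the factor $\frac{1}{4}$ for degree-$2$ vertices in Theorem~\ref{EXPn} and the factor $\frac{1}{2^{n-1}}$ for the degree-$(n-1)$ universal vertex in Theorem~\ref{EXSn}), so it may simply be stated. The inequality is tight precisely when $G$ is $k$-regular, which aligns with the bound being a worst-case lower estimate.
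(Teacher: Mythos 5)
Your proof is correct and is essentially identical to the paper's: both use $\mathrm{E}[X_n] = \sum_{v \in V(G)} \Pr[Y_v = 1] = \sum_{v \in V(G)} 2^{-d(v)}$ and then bound each term below by $2^{-k}$ using $d(v) \leq \Delta(G) = k$. Your explicit justification of $\Pr[Y_v = 1] = 2^{-d(v)}$ via independence of edge orientations, and the remark on tightness for $k$-regular graphs, are fine additions but do not change the argument.
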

\begin{proof}
Clearly, 
\begin{eqnarray*}
	 \mathrm{E}[X_n] & = & \sum_{v \in V(G)} \Pr[Y_v = 1] \\
				     & = & \sum_{v \in V(G)} \frac{1}{2^{d(v)}} \\
					 & \geq & n \times \frac{1}{2^k} \\
					 & = & \frac{n}{2^k} \;{.} 
\end{eqnarray*}
\end{proof}

Note that, for $B_{n, k}$, $\Pr[X_n > 0] \tendingtoinfty{n} 1$ and
$\mathrm{E}[X_n] \tendingtoinfty{n} \infty$.

\subsection{Random graphs}

In this section we let $G$ be a random graph on $n$ vertices such that for each
distinct pair $u, v \in V(G)$, edge $uv$ is likely to exist with probability
$p$. This is the Erd\H{o}s-R\'{e}nyi random graph, classically denoted by
$G_{n,p}$. Clearly, in $G_{n,p}$ the probability that a randomly chosen vertex
$v$ has degree $d \geq 0$ is:
\begin{eqnarray*}
	 \Pr[d(v) = d] & = & \binom{n-1}{d} p^d (1-p)^{n-1-d} \\
			       & = & \binom{n-1}{d} \left( \frac{p}{1-p} \right)^d (1-p)^{n-1} \;{.}
\end{eqnarray*}
Letting $z = (n-1)p$ be the \emph{mean degree} of $G_{n,p}$, we obtain
\begin{eqnarray*}
	 \Pr[d(v) = d] & = & \binom{n-1}{d} \left( \frac{z}{n-1-z} \right)^d \left(1-\frac{z}{n-1}\right)^{n-1} \;{.}
\end{eqnarray*}
Clearly, 
\begin{eqnarray*}
	 \Pr[d(v) = d] \tendingtoinfty{n} & \frac{z^d}{d! e^z} \;{,} 
\end{eqnarray*}
which is the well-known Poison distribution.
	
\begin{thm}
\label{EXGnp}
If $G$ is a random graph $G_{n, p}$, then $\mathrm{E}[X_n] \approx \frac{n}{e^{z/2}}$, for each $n \geq 1$, $0 \leq p \leq 1$.
\end{thm}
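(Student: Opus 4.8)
The plan is to reuse the identity $\mathrm{E}[X_n] = \sum_{v \in V(G)} \Pr[Y_v = 1]$ established at the start of this section, together with the observation exploited throughout the previous subsections that a vertex of degree $d$ becomes a sink exactly when all $d$ incident edges are oriented toward it, an event of probability $1/2^{d}$. The one genuinely new feature here is that the degree $d(v)$ is itself random, so both the orientation randomness and the Erd\H{o}s--R\'enyi randomness must be averaged. Since the random orientation is chosen independently of $G$, I would first condition on $G$, giving $\Pr[Y_v = 1 \mid G] = 2^{-d(v)}$, and only then take the outer expectation over the graph.

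By symmetry every vertex has the same degree law, namely $d(v)$ distributed as $\mathrm{Binomial}(n-1, p)$, so that $\mathrm{E}[X_n] = n\,\mathrm{E}\big[2^{-d(v)}\big]$. The quantity $\mathrm{E}[2^{-d(v)}]$ is precisely the probability generating function of a $\mathrm{Binomial}(n-1,p)$ variable evaluated at $s = 1/2$. Using the standard identity $\mathrm{E}[s^{d(v)}] = (1 - p + ps)^{n-1}$ at $s = 1/2$ yields the exact value $\mathrm{E}[2^{-d(v)}] = (1 - p/2)^{n-1}$, and hence the exact closed form $\mathrm{E}[X_n] = n\,(1 - p/2)^{n-1}$, which already checks out in the degenerate case $n = 1$.

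It then remains only to recover the stated asymptotic shape. Substituting $p = z/(n-1)$ gives $\mathrm{E}[X_n] = n\,\big(1 - \tfrac{z}{2(n-1)}\big)^{n-1}$, and the elementary limit $\big(1 - \tfrac{z}{2(n-1)}\big)^{n-1} \to e^{-z/2}$ delivers $\mathrm{E}[X_n] \approx n/e^{z/2}$. An alternative route, more in keeping with the Poisson limit just derived in this subsection, is to average $2^{-d}$ directly against the limiting Poisson weights $z^{d}/(d!\,e^{z})$; the sum $e^{-z}\sum_{d \geq 0}(z/2)^{d}/d! = e^{-z}e^{z/2} = e^{-z/2}$ reproduces the same factor and hence the same answer.

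The main obstacle is not any hard estimate but rather pinning down what the sign $\approx$ is meant to assert: the formula $\mathrm{E}[X_n] = n(1-p/2)^{n-1}$ is exact, and the approximation symbol conceals the passage to the large-$n$ Poisson regime. The only point needing a little care is justifying the interchange of limit and infinite sum in the Poisson route, or equivalently controlling $\big(1 - \tfrac{z}{2(n-1)}\big)^{n-1} \to e^{-z/2}$ when $z = (n-1)p$ is itself allowed to depend on $n$; since $2^{-d} \leq 1$ is bounded and the Poisson weights are summable, dominated convergence settles this without difficulty.
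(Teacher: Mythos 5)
Your proof is correct, and its main route differs genuinely from the paper's. The paper conditions on the degree of a fixed vertex, replaces the exact binomial law $\Pr[d(1)=d]$ by its Poisson limit $z^d/(d!\,e^z)$, and then sums the series $\sum_{d\geq 0}(z/2)^d/d! = e^{z/2}$ to get $\mathrm{E}[X_n]\approx n/e^{z/2}$; the approximation sign enters twice, once in extending the truncated sum to an infinite one and once in the binomial-to-Poisson substitution, and neither step is quantified. You instead evaluate the binomial probability generating function at $s=1/2$ to obtain the \emph{exact} identity $\mathrm{E}[X_n] = n\,(1-p/2)^{n-1}$, and only then pass to the limit $\bigl(1-\tfrac{z}{2(n-1)}\bigr)^{n-1}\to e^{-z/2}$. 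This buys two things: it isolates precisely what the paper's ``$\approx$'' conceals (the error is exactly the gap between $(1-p/2)^{n-1}$ and $e^{-(n-1)p/2}$, which is asymptotically negligible in ratio only when $p\to 0$, i.e.\ in the fixed-$z$ regime the paper mentions in its closing remark, not for fixed $p$), and it gives a closed form usable for any $n$ and $p$ without asymptotics. Your secondary route via the Poisson weights coincides with the paper's argument, and your dominated-convergence remark supplies the justification the paper leaves implicit.
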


\begin{proof}
We have 
\begin{eqnarray*}
	 \mathrm{E}[X_n] & = & \sum_{v \in V(G)} \Pr[Y_v = 1] \\
					& = & n \Pr[Y_1 = 1] \\
		  & = & n \sum_{d = 0}^{n - 1} \Pr[Y_1 = 1 \st d(1) = d] \times \Pr[d(1) = d] \\
		  & = & n \sum_{d = 0}^{n - 1} \frac{1}{2^d} \times \Pr[d(1) = d] \;{.}
\end{eqnarray*}
On the other hand,
\begin{eqnarray*}
	\sum_{d = 0}^{n - 1} \frac{1}{2^d} \times \Pr[d(1) = d] & \approx & \sum_{d \geq 0} \frac{1}{2^d} \times \Pr[d(1) = d] \\
    & = & \sum_{d \geq 0} \frac{1}{2^d} \times \frac{z^d}{d! e^z} \;{,}
\end{eqnarray*}
and thus,
\begin{eqnarray*}
	 \mathrm{E}[X_n] & \approx & n \sum_{d \geq 0} \frac{1}{2^d} \times \frac{z^d}{d! e^z} \\
					 & = & \frac{n}{e^z} \sum_{d \geq 0} \frac{(z/2)^d}{d!} \\
					 & = & \frac{n}{e^z} \times e^{z/2} \\
				     & = & \frac{n}{e^{z/2}} \;{.}
\end{eqnarray*} 
\end{proof}

\begin{thm}
\label{PXGnp}
If $G$ is a random graph $G_{n, p}$, then $\Pr[X_n > 0] \tendingtoinfty{n} 0$.
\end{thm}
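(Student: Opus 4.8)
The plan is to prove this by the first-moment method. Since $X_n$ counts sinks and is therefore a nonnegative integer-valued random variable, Markov's inequality yields
\[
\Pr[X_n > 0] = \Pr[X_n \geq 1] \leq \mathrm{E}[X_n],
\]
so it suffices to show that $\mathrm{E}[X_n] \tendingtoinfty{n} 0$. The work reduces entirely to controlling the expectation already studied in Theorem~\ref{EXGnp}.

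Rather than lean on the Poisson ``$\approx$'' of Theorem~\ref{EXGnp}, which is only asymptotically valid and would require bounding the approximation error inside a limit, I would first put $\mathrm{E}[X_n]$ in exact closed form. Starting from the exact finite sum
\[
\mathrm{E}[X_n] = n \sum_{d=0}^{n-1} \frac{1}{2^d} \binom{n-1}{d} p^d (1-p)^{n-1-d},
\]
the binomial theorem collapses the sum with $a = p/2$ and $b = 1-p$, since $\sum_{d=0}^{n-1} \binom{n-1}{d} (p/2)^d (1-p)^{n-1-d} = (p/2 + 1 - p)^{n-1}$. Hence $\mathrm{E}[X_n] = n \left( 1 - \frac{p}{2} \right)^{n-1}$ exactly, which is consistent with the earlier $n/e^{z/2}$ since $\left(1-\frac{p}{2}\right)^{n-1} \approx e^{-(n-1)p/2} = e^{-z/2}$.

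With the exact form in hand the limit is immediate. For any fixed $p$ with $0 < p \leq 1$ we have $1 - \frac{p}{2} \in [\frac{1}{2}, 1)$, so the geometric factor $\left(1-\frac{p}{2}\right)^{n-1}$ decays exponentially in $n$ and dominates the linear factor $n$. Therefore $\mathrm{E}[X_n] \tendingtoinfty{n} 0$, and the inequality $\Pr[X_n > 0] \leq \mathrm{E}[X_n]$ forces $\Pr[X_n > 0] \tendingtoinfty{n} 0$, as claimed.

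There is essentially no hard step here: once the expectation is known the statement is a routine first-moment bound. The only points demanding care are the two I have addressed above, namely replacing the asymptotic ``$\approx$'' of Theorem~\ref{EXGnp} by the exact closed form so that no error term must be tracked through the limit, and the degenerate case $p = 0$, in which $G_{n,p}$ is edgeless, every vertex is a sink, and $\Pr[X_n > 0] = 1$ for all $n$; the claim therefore presumes $p$ to be a fixed positive constant, and I would state this hypothesis explicitly.
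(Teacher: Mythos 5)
Your proposal is correct and rests on the same key idea as the paper's proof: Markov's inequality applied to $\mathrm{E}[X_n]$. The difference lies in how the decay of the expectation is justified. The paper simply writes $\Pr[X_n \geq 1] \leq \mathrm{E}[X_n] = n/e^{z/2} \tendingtoinfty{n} 0$, importing the value of $\mathrm{E}[X_n]$ from Theorem~\ref{EXGnp} --- but that theorem only establishes $\mathrm{E}[X_n] \approx n/e^{z/2}$ via the Poisson limit of the degree distribution, and the proof of Theorem~\ref{PXGnp} silently promotes this approximation to an equality inside a limit, without bounding the error. Your route repairs exactly this soft spot: the binomial-theorem collapse $\sum_{d=0}^{n-1}\binom{n-1}{d}(p/2)^d(1-p)^{n-1-d} = (1-p/2)^{n-1}$ gives the exact identity $\mathrm{E}[X_n] = n\left(1-\frac{p}{2}\right)^{n-1}$, from which the limit is a clean statement about a linear factor against geometric decay, with no asymptotic approximation to track. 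You also correctly isolate the hypothesis $p > 0$: with $p = 0$ (which is ``fixed'' and allowed by the paper's stated range $0 \leq p \leq 1$ in Theorem~\ref{EXGnp}), the graph is edgeless, $X_n = n$, and the conclusion fails; the paper's remark following the theorem about fixed $p$ does not exclude this case, so your explicit hypothesis is a genuine (if minor) correction. In short, the paper's proof buys brevity by reusing Theorem~\ref{EXGnp}; yours buys rigor and self-containment at the cost of a few extra lines.
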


\begin{proof}
By Markov's inequality,
$\Pr[X_n \geq 1] \leq \mathrm{E}[X_n] = n/e^{z/2} \tendingtoinfty{n} 0$.
\end{proof}

We remark that both Theorems~\ref{EXGnp} and~\ref{PXGnp} are given for a fixed
value of $p$. If, instead, it is $z$ that is fixed (i.e., $p$ is made
proportionally smaller as $n$ increases), then
$\mathrm{E}[X_n] \tendingtoinfty{n} \infty$. In this case, Theorem~\ref{EXGnp}
remains valid but Theorem~\ref{PXGnp} loses its meaning.

\subsection{Power-law random graphs}

The difference between a power-law random graph, here denoted by $P_{n,a}$ with
$a \geq 2$, and a random graph is its degree distribution. In $P_{n,a}$, the
probability that a randomly chosen vertex $v$ has degree $d > 0$ is
\begin{eqnarray*}
	 \Pr[d(v) = d] = \frac{d^{-a}}{\delta(a)}\mbox{,}
\end{eqnarray*}
where $\delta(a) = \sum_{k=1}^{n-1} k^{-a}$.

Note that $\delta(a) \tendingtoinfty{n} \zeta(a)$, where $\zeta(a)$ is the
Riemann zeta function, of which it is known that
$\zeta(a) \tendingtoinfty{a} 1$. For example, $\zeta(2) \approx 1.64$,
$\zeta(3) \approx 1.20$, and $\zeta(4) \approx 1.08$, which in turn are the
approximate limits of $\delta(2)$, $\delta(3)$, and $\delta(4)$, respectively,
when $n \to \infty$.

\begin{thm}
\label{EXPna}
If $G$ is a power-law random graph $P_{n, a}$, then
$\mathrm{E}[X_n] \geq \frac{n}{2 \delta(a)}$, for all $n \geq 1$.
\end{thm}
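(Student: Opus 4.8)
The plan is to follow the same vertex-indicator decomposition used in Theorems~\ref{EXPn} through~\ref{EXBn}, namely $\mathrm{E}[X_n] = \sum_{v \in V(G)} \Pr[Y_v = 1]$, and then condition on the degree of each vertex exactly as in the proof of Theorem~\ref{EXGnp}. First I would write, for a single randomly chosen vertex,
\begin{eqnarray*}
	\Pr[Y_v = 1] & = & \sum_{d = 0}^{n-1} \Pr[Y_v = 1 \st d(v) = d] \times \Pr[d(v) = d] \\
	             & = & \sum_{d = 0}^{n-1} \frac{1}{2^d} \times \Pr[d(v) = d],
\end{eqnarray*}
relying on the fact (already established when $\mathrm{E}[Y_v]$ was first computed) that a vertex of degree $d$ is a sink precisely when all $d$ incident edges are oriented toward it, an event of probability $1/2^d$ since the orientations are independent and uniform.

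Next I would substitute the power-law degree distribution $\Pr[d(v) = d] = d^{-a}/\delta(a)$. The one wrinkle here, compared with the Poisson case, is that the power-law law is stated only for $d > 0$, so I must account for the $d = 0$ term separately; since $1/2^0 = 1 \geq 0$, simply dropping it gives a valid lower bound, and I would note this explicitly to justify the inequality (rather than an approximation). The core estimate is then to bound the weighted sum $\sum_{d \geq 1} 2^{-d} d^{-a}/\delta(a)$ from below. The cleanest route is to retain only the $d = 1$ term, which contributes $\frac{1}{2} \cdot \frac{1}{\delta(a)}$, giving $\Pr[Y_v = 1] \geq \frac{1}{2\delta(a)}$ for each vertex. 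Summing over the $n$ vertices yields $\mathrm{E}[X_n] \geq \frac{n}{2\delta(a)}$, which is exactly the claimed bound.

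I expect the main (indeed only) obstacle to be bookkeeping around the domain of the degree distribution and the choice of which terms to keep. Keeping just the $d=1$ term is what produces the stated constant $\frac{1}{2\delta(a)}$; any attempt to retain more terms would strengthen the bound but not match the theorem as stated, so the task is to discard everything else cleanly while confirming each dropped term is nonnegative. There is no convergence or limiting argument required, since $\delta(a)$ is the finite partial sum $\sum_{k=1}^{n-1} k^{-a}$ for the given $n$, so the bound holds for all $n \geq 1$ without passing to $\zeta(a)$; this is why the statement is an honest inequality valid for every $n$, in contrast to the $n \to \infty$ approximation of Theorem~\ref{EXGnp}.
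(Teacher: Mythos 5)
Your proposal is correct and follows essentially the same route as the paper: the indicator decomposition $\mathrm{E}[X_n] = \sum_{v} \Pr[Y_v = 1]$, conditioning on the vertex degree, substituting $\Pr[Y_v = 1 \st d(v) = d] = 1/2^d$ together with the power-law distribution, and retaining only the $d = 1$ term to obtain $n/(2\delta(a))$. The only cosmetic differences are that the paper uses symmetry to write the sum as $n \Pr[Y_1 = 1]$ and starts the degree sum at $d = 1$ (so your separate handling of $d = 0$ is unnecessary but harmless).
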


\begin{proof}
We have
\begin{eqnarray*}
	 \mathrm{E}[X_n] & = & \sum_{v \in V(G)} \Pr[Y_v = 1] \\
		  & = & n \Pr[Y_1 = 1] \\
		  & = & n \sum_{d = 1}^{n - 1} \Pr[Y_1 = 1 \st d(1) = d] \Pr[d(1) = d] \\ 
		  & = & n \sum_{d = 1}^{n - 1}  \frac{1}{2^d} \times \frac{d^{-a}}{\delta(a)} \\
          & \geq & \frac{n}{2 \delta(a)} \;{.}
\end{eqnarray*}
\end{proof}

\begin{thm}
\label{PXPna}
If $G$ is a power-law random graph $P_{n, a}$, then
$\Pr[X_n > 0] \geq 1 - \left( 1 - \frac{2 - \sqrt{2}}{2\delta(a)} \right)^{n}$,
for all $n \geq 1$.
\end{thm}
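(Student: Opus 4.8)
The plan is to follow the template of Theorem~\ref{PXBn}: bound $\Pr[X_n=0]$ from above by a product of per-vertex ``not a sink'' probabilities and then take complements. Concretely, I would first establish
$$\Pr[X_n = 0] = \Pr\left[\bigwedge_{v \in V(G)} Y_v = 0\right] \le \prod_{v \in V(G)} \Pr[Y_v = 0].$$
Because every vertex of $P_{n,a}$ carries the same degree distribution, each factor equals $1 - \Pr[Y_1 = 1]$, so the product collapses to $\left(1 - \Pr[Y_1 = 1]\right)^n$ and the whole problem reduces to a lower bound on the single marginal $\Pr[Y_1 = 1]$, together with the justification of the displayed inequality.

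For the marginal I would condition on the degree exactly as in the proof of Theorem~\ref{EXPna}, obtaining
$$\Pr[Y_1 = 1] = \sum_{d=1}^{n-1} \frac{1}{2^d}\cdot\frac{d^{-a}}{\delta(a)} = \frac{1}{\delta(a)}\sum_{d=1}^{n-1}\frac{d^{-a}}{2^d}.$$
It then suffices to bound the polylogarithmic series $\sum_{d=1}^{n-1} d^{-a}2^{-d}$ from below by the constant $(2-\sqrt2)/2$, uniformly over $a \ge 2$ and $n \ge 2$. This is a routine estimate with ample room to spare: the series is dominated by its smallest indices (the $d=1$ term alone already contributes $1/2$), so a short elementary argument delivers the stated constant. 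Substituting gives $\Pr[Y_1 = 1] \ge (2-\sqrt2)/(2\delta(a))$ and hence $\Pr[X_n=0] \le \left(1 - (2-\sqrt2)/(2\delta(a))\right)^n$, which is the asserted bound after writing $\Pr[X_n>0] = 1 - \Pr[X_n=0]$.

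The step I expect to be the real obstacle is the product inequality, since the events $\set[Y_v = 0 \st v \in V(G)]$ are not independent. For a fixed graph I would justify it through the negative correlation of the sink indicators: two adjacent vertices can never both be sinks, while two non-adjacent vertices have disjoint sets of incident edges and hence genuinely independent indicators. Encoding each edge orientation as a pair of perfectly anti-correlated indicators and viewing each $Y_v$ as an increasing function of the orientations pointing into $v$ shows that the family $\set[Y_v \st v \in V(G)]$ is negatively associated, which yields $\Pr[\bigwedge_v Y_v = 0] \le \prod_v \Pr[Y_v = 0]$. The additional subtlety specific to $P_{n,a}$ is that one must also average over the random graph: after conditioning on $G$ the bound reads $\prod_{v}(1 - 2^{-d(v)})$, and moving the expectation over $G$ inside the product forces the degrees of distinct vertices to be treated as drawn independently from the power-law marginal --- the modelling assumption already implicit in the computation of $\mathrm{E}[X_n]$. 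Pinning down this last exchange cleanly, rather than the series estimate, is where the care is needed.
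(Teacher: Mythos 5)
Your proposal is correct in substance, but it is not the paper's argument, and the comparison is instructive. The paper uses no correlation machinery at all: it discards every vertex outside the set $S$ of degree-$1$ vertices, takes a maximum independent set $I$ of $G[S]$ (so $|I| \geq |S|/2$, since $G[S]$ has maximum degree $1$), and observes that the sink indicators over $I$ are genuinely independent fair coins --- each vertex of $I$ has exactly one incident edge, and no two of them can share that edge because $I$ is independent. This gives $\Pr[X_n = 0 \st |S| = s] \leq 2^{-s/2}$, and summing against the binomial law assumed for $|S|$ yields the closed form via the binomial theorem; the $\sqrt{2}$ in the constant is exactly the price of passing from $S$ to $I$. (Note that the proof of Theorem~\ref{PXBn} likewise takes the product only over an independent set, precisely so that the factorization is honest independence; your product over all of $V(G)$ cannot be had that way.) Your substitute for this step --- negative association --- is sound: the two orientation indicators of each edge form a negatively associated pair, these pairs are independent across edges, and each $Y_v$ is a non-decreasing function of the disjoint block of indicators pointing into $v$, so the standard closure properties (Joag-Dev and Proschan) make the family $\{Y_v\}$ negatively associated and give $\Pr[\bigwedge_v Y_v = 0 \st G] \leq \prod_v (1 - 2^{-d(v)})$; you are also right that pairwise negative correlation alone would not suffice for this. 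The one caveat is the step you flag yourself: factoring $\mathrm{E}_G[\prod_v (1 - 2^{-d(v)})]$ as $\prod_v (1 - \mathrm{E}[2^{-d(v)}])$ amounts to assuming the degrees of distinct vertices mutually independent. You attribute that assumption to Theorem~\ref{EXPna}, but its proof uses only the one-vertex marginal (linearity of expectation needs no joint structure); the real precedent is the paper's own proof of the present theorem, whose binomial law for $|S|$ is the same mean-field idealization, restricted to the degree-$1$ indicator events --- so your assumption is formally a little stronger, though of identical character. Granting it, your route in fact proves more than the statement: the $d = 1$ term alone gives $\Pr[Y_1 = 1] \geq \frac{1}{2\delta(a)}$, hence the strictly stronger bound $\Pr[X_n > 0] \geq 1 - (1 - \frac{1}{2\delta(a)})^n$ (as $1 > 2 - \sqrt{2}$), and the method applies to any degree distribution; what the paper's route buys is elementarity --- no correlation inequalities --- and a weaker modeling assumption.
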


\begin{proof}
We have
\begin{eqnarray*}
	 \Pr[X_n = 0] & = & \Pr[\bigwedge_{v \in V(G)} Y_v = 0] \\
	& \leq & \Pr[\bigwedge_{v \in S} Y_v = 0],
\end{eqnarray*}
where $S$ denotes the subset of degree-$1$ vertices of $V(G)$. Then
\begin{eqnarray*}
	 \Pr[X_n = 0] & \leq & \sum_{s=0}^n \Pr[\bigwedge_{v \in S} Y_v = 0 \st |S|=s] \Pr[|S|=s] \;{.}	
\end{eqnarray*}

Let $I \subseteq S$ be a maximum independent set of $G[S]$. Therefore,
$|I| \geq \frac{|S|}{2}$. Consequently,
\begin{eqnarray*}
	 \Pr[\bigwedge_{v \in S} Y_v = 0 \st |S|=s] & \leq & \Pr[\bigwedge_{v \in I} Y_v = 0 \st |S|=s]  \leq \left( \frac{1}{2} \right)^{s/2} \;{.}
\end{eqnarray*}
On the other hand, since $\Pr[d(v) = 1] = 1 / \delta(a)$ for each $v \in V(G)$, 
\begin{eqnarray*}
	  \Pr[|S|=s] & = & \binom{n}{s} \left( \frac{1}{\delta(a)} \right)^{s} \left( 1 - \frac{1}{\delta(a)} \right)^{n - s} \;{,}
\end{eqnarray*}
and therefore,
\begin{eqnarray*}
	 \Pr[X_n = 0] & \leq & \sum_{s=0}^n \left( \frac{1}{2} \right)^{s/2} \binom{n}{s} \left( \frac{1}{\delta(a)} \right)^{s} \left( 1 - \frac{1}{\delta(a)} \right)^{n - s} \\
				    & = & \sum_{s=0}^n \binom{n}{s} \left( \frac{1}{\sqrt{2}\delta(a)} \right)^{s} \left( 1 - \frac{1}{\delta(a)} \right)^{n - s} \\
				    & = & \left( \frac{1}{\sqrt{2}\delta(a)} + 1 - \frac{1}{\delta(a)} \right)^{n} \\
				    & = & \left( 1 - \frac{2 - \sqrt{2}}{2\delta(a)} \right)^{n} \;{.}
\end{eqnarray*}

Consequently,
$\Pr[X_n > 0] = 1 - \Pr[X_n = 0] \geq 1 - \left( 1 - \frac{2 - \sqrt{2}}{2\delta(a)} \right)^{n} \;{.}$
\end{proof}

It follows from Theorems~\ref{EXPna} and~\ref{PXPna} that
$\mathrm{E}[X_n] \tendingtoinfty{n} \infty$ and
$\Pr[X_n > 0] \tendingtoinfty{n} 1$, respectively.

\section{Summary}

Table~\ref{tab:Resumo} summarizes our finds in Section $5$. They all refer to
the particular case in which each edge of $G$ corresponds to a resource. By
Theorem~\ref{CaracLivreDeadlock:Teorico}, therefore, any deadlock-free RGM must
guarantee the acyclicity of $G$ at all times. Our approach to accomplish this
has been probabilistic, of the Las Vegas type, in an attempt at feasibility even
in the case of large computations. What can be expected in the general case, as
well as questions of fairness, remain research topics.

\begin{table}[t]
	\caption{Summarization of $\Pr[X_n > 0]$ and $\mathrm{E}[X_n]$ for several classes of graphs.}
	\label{tab:Resumo}
	\begin{center}
	\begin{tabular}{|l|c|c|}
		\hline
		Class & $\Pr[X_n > 0]$ & $\mathrm{E}[X_n]$ \\
		\hline
		\hline
		$T_n$ & $1$ &  \begin{tabular}{c}
									$P_n$:  $\frac{n+2}{4}$ \\
									$S_n$: $\frac{n-1}{2} + \frac{1}{2^{n-1}}$
									\end{tabular}
									\\
		\hline
		$C_n$ & $1 - \frac{1}{2^{n-1}}$ & $\frac{n}{4}$ \\
		\hline
		$K_n$ & $\frac{n}{2^{n-1}}$ & $\frac{n}{2^{n-1}}$ \\
		\hline
		$B_{n, k}$ & $\geq 1 - \left( 1 - \frac{1}{2^k} \right)^{\left\lceil \frac{n}{k+1} \right\rceil }$ & $\geq \frac{1}{2^k} \left\lceil \frac{n}{k+1} \right\rceil$ \\
		\hline
		$G_{n, p}$ & $\tendingtoinfty{n} 0$ & $\approx \frac{n}{e^{p(n-1)/2}}$ \\
		\hline
		$P_{n, a}$ & $\geq 1 - \left( 1 - \frac{2 - \sqrt{2}}{2\delta(a)} \right)^{n}$ & $\geq \frac{n}{2 \delta(a)}$ \\
		\hline
	\end{tabular}
	\end{center}
\end{table}

\subsection*{Acknowledgments}
We acknowledge partial support from CNPq, CAPES, and a FAPERJ BBP grant.

\bibliographystyle{plain}
\bibliography{PrevencaoDeadlock}

\end{document}